\documentclass[11pt]{article}
\usepackage[T1]{fontenc}
\usepackage[margin=1in]{geometry}
\usepackage{amsmath,amssymb,amsthm}
\usepackage{graphicx}
\IfFileExists{cite.sty}{\usepackage{cite}}{} 

\newtheorem{theorem}{Theorem}[section]

\newtheorem{lemma}{Lemma}[section]
\theoremstyle{remark}
\newtheorem{remark}{Remark}[section]
\usepackage{setspace}  
\usepackage{authblk}   % preamble

\title{Lee-Yang Zeros And Particle Fluctuations}

\author[1]{M.E.H. Bahri\thanks{\texttt{mbahri@math.rutgers.edu}}}
\author[1]{Ian Jauslin\thanks{\texttt{ian.jauslin@rutgers.edu}}}
\author[1,2]{Joel L. Lebowitz\thanks{\texttt{lebowitz@math.rutgers.edu}}}
\affil[1]{Department of Mathematics, Rutgers University, 08854, U.S.A.}
\affil[2]{Department of Physics and Astronomy, Rutgers University, 08854, U.S.A.}

\date{}

\begin{document}
\maketitle

\begin{abstract}We consider classical particles in the continuum in the grand canonical ensemble, with a stable, tempered and lower-regular pair potential and boundary conditions of uniformly bounded density. We prove that if the Lee--Yang zeros of the grand canonical partition function in the complex fugacity plane $z = e^{\beta\mu}$ remain bounded away from a real point $z_0 > 0$ for all sufficiently large volumes, then along cubes the thermodynamic limit and differentiation commute at $z_0$: every derivative of the finite-volume pressure in the chemical potential converges, uniformly in a neighborhood of $z_0$, to the corresponding derivative of the limiting pressure. The limiting values of all derivatives are independent of the boundary condition; in particular, the density and the particle-number variance per unit volume converge to $\beta^{-1}\partial_\mu p$ and $\beta^{-2}\partial^{2}_{\mu} p$, respectively. The result extends to the unbounded boundary conditions of Procacci and Yuhjtman for super-stable potentials in addition to Ruelle's tempered boundary conditions.\end{abstract}

\section{Introduction}

Many properties of a system depend on the boundary condition imposed on a finite region $\Lambda$. The pair-correlation function is one property: distinct infinite-volume Gibbs measures, selected by different boundary conditions, carry different correlation functions, and coexisting phases differ in all their local properties \cite{ruelle1969statistical,georgii2011gibbs}. What effect boundary conditions have on measurable macroscopic quantities, such as the compressibility, is the main subject of this study.

The finite-volume compressibility, which is governed by the particle-number variance $\sigma_\Lambda = \langle N_\Lambda^2\rangle - \langle N_\Lambda\rangle^2$, is of particular interest as the thermodynamic limit is taken. In part, this is due to its historical significance in detecting phase transitions. For example, at the liquid--gas critical point, the compressibility diverges due to strong density fluctuations, thus leading to critical opalescence \cite{einstein1910theorie}. The compressibility is also relevant away from points of phase transitions, particularly in detecting anomalous properties of phases. Certain systems possess anomalous non-extensive fluctuations, where $\sigma_\Lambda$ grows more slowly than the volume $|\Lambda|$---a property called hyperuniformity, or super-homogeneity \cite{torquato2003local,torquato2018hyperuniform}. This leads to a vanishing compressibility in the thermodynamic limit. The most elementary examples are at $T = 0$, where periodic crystals and many quasicrystals possess number fluctuations that grow only as the surface area, $|\Lambda|^{\frac{d-1}{d}}$ \cite{torquato2003local,torquato2018hyperuniform}. At positive temperatures, one of the most prominent examples of hyper-uniformity is the classical Coulomb system, where screening sum rules suppress extensive charge fluctuations, which again grow only as the surface area \cite{martin1980charge,lebowitz1983charge}. One can therefore observe that hyperuniformity does not require long-range order: disordered hyperuniform states, statistically isotropic and without Bragg peaks, were first found in maximally random jammed sphere packings \cite{donev2005unexpected} and now appear across condensed-matter, mathematical, and biological systems \cite{torquato2018hyperuniform}.

In fact, equilibrium systems with stable, tempered interactions are generally expected not to be hyperuniform. Ginibre proved a lower bound on the compressibility of a system in a finite volume $\Lambda$ with a pair potential $\phi \ge 0$ or a potential with a hard-core \cite{ginibre1967rigorous}, so the fluctuations are extensive; mathematically, this amounted to a lower bound on the second derivative of the finite-volume pressure $p(\mu;\Lambda,b)$ in $\mu$. And therefore harmonic crystals at positive temperature are not hyper-uniform, contrary to $T = 0$. In addition, with a careful and detailed revision, one may readily see that Ginibre's arguments, which were done with free boundaries, could be carried out for arbitrary boundary conditions. Therefore, a uniform lower bound on the compressibility could be extended independently of the boundary condition.

Dereudre and Flimmel extended Ginibre's results, proving non-hyperuniformity for a broad class of Gibbs point processes, including those with a super-stable, lower-regular, integrable pair potential, hence beyond the purely repulsive case \cite{dereudre2024nonhyperuniformity}. They work with infinite-volume Gibbs measures, so their lower bound holds for every Gibbs measure (defined below) and is independent of the boundary condition.

Despite these remarkably general lower bounds on the number variance, there remains the question of whether the limiting value per unit volume is the same for different boundary conditions: two co-existing phases could both be non-hyperuniform yet have different compressibilities. Nor is it well understood how the compressibility of a finite box $\Lambda$ in an infinite system compares, as $|\Lambda| \to \infty$, with the infinite-volume compressibility, or whether that limit is common to every Gibbs measure at a given fugacity.

To state our main result, we must introduce a few notions. We consider the consistent family of all finite-volume measures, over all bounded $\Lambda$ and all boundary configurations, known as the Gibbs specification of the interaction. From this specification we select a finite region, $\Lambda$, within an infinite system. To describe the local properties within $\Lambda$ we must then fix a boundary condition $b$, which may be periodic or a specified configuration of particles in $\Lambda^{c}$. Fixing such a configuration determines a probability distribution on the configurations inside $\Lambda$. Since an arbitrary number of particles may be in $\Lambda$, allowing for density and energy fluctuations, the grand-canonical (GC) ensemble is the correct framework.

An infinite-volume measure is a Gibbs measure, and satisfies the Dobrushin--Lanford--Ruelle (DLR) equations, if for every bounded $\Lambda$ its conditional distribution inside $\Lambda$, given the configuration in $\Lambda^{c}$, coincides with the finite-volume measure carrying that configuration as boundary condition \cite{georgii2011gibbs,dobrushin1968problem,lanford1969observables}. The Gibbs measures form a simplex whose extreme points are the pure phases, and more than one exists exactly when phases coexist \cite{georgii2011gibbs}; the pressure is common to all of them, but the measures and their correlation functions need not be. The GC partition function
\begin{equation}\label{eq:partfun}
\Xi(\mu;\Lambda,b) = \sum_{N=0}^{\infty} \frac{e^{\beta\mu N}}{N!} \int\!\cdots\!\int_{\Lambda^{N}} e^{-\beta\phi(x_1,\dots,x_N)} \prod_{i=1}^{N} e^{-\beta\phi_b(x_i)}\, dx_i
\end{equation}
with $\beta = (k_B T)^{-1}$ and $N$ is the number of particles in the volume. The pair potential $\phi(x_1,\dots,x_N) = \sum_{i<j} \phi(|x_i - x_j|)$ where $\phi_b(x_i)$ is the energy of the $i$-th particle due to the boundary configuration. Since all the coefficients of $z\equiv e^{\beta\mu}$ in Eq.~\eqref{eq:partfun} are non-negative, we have $\Xi(z;\Lambda,b) \ge 1$ for $z > 0$ and it is monotonically increasing with $z$. We assume henceforth that $\phi$ is stable, tempered and lower-regular in the sense of Ruelle \cite{ruelle1969statistical}:
\begin{equation}\label{eq:s}
\text{stable:}\quad \exists B \ge 0 \qquad \phi(x_1,\dots,x_N) \ \ge\ - B\, N
%\text{stable:}\quad \exists B \ge 0 \qquad \phi(x_1,\dots,x_N) \ \ge\ \sum_{q\in\mathbb{Z}^{d}}  - B\, N_q
\end{equation}
for every finite configuration.
\begin{align}
&\text{tempered:}\quad \exists\, R_0 \ge 0,\ \varepsilon > 0 \ \text{ such that } \ \phi(x) \le |x|^{-(d+\varepsilon)} \quad \forall\, |x| \ge R_0, \label{eq:temp}\\
&\text{lower-regular:}\quad \exists\, R_1 \ge 0 \ \text{ and a decreasing } \psi\colon [R_1,\infty)\to[0,\infty) \ \text{ such that} \label{eq:lr1}\\
&\qquad\qquad \phi(x) \ \ge\ -\,\psi(|x|) \quad \forall\, |x| \ge R_1, \notag\\
&\qquad\qquad \int_{R_1}^{\infty} \psi(r)\, r^{d-1}\, dr < \infty. \label{eq:lr2}
\end{align}
We also restrict ourselves to boundary conditions for which the infinite volume pressure is well defined and for which there exists a coinciding Gibbs measure. In particular, we will focus on boundary conditions of uniformly bounded density: for $\bar\rho \ge 0$, set
\[
	\mathcal{B}_{\bar{\rho}} \ :=\ \big\{\, b:\ m_q(b) \le \bar\rho\ \ \forall\, q \in \mathbb{Z}^{d} \,\big\}, \qquad m_q(b) = \#\,\big(b \cap \Delta_q\big),
\]
where $m_q(b)$ counts the points of $b$ in the unit cube $\Delta_q := q + [0,1)^d$, $q \in \mathbb{Z}^{d}$. The thermodynamic limit will be taken along the cubes $\Lambda_L := [-L, L]^{d}$, $L \ge 1$. It is briefly worth noting, since we will need to uniformly upper bound the partition function (Lemma~\ref{lem:pfbd} below), that there is a trade-off between the class of potential and the boundary conditions. That is, with super-stability,
\begin{equation}\label{eq:ss}
\text{super-stable:}\quad \exists A>0, B \ge 0 \qquad \phi(x_1,\dots,x_N) \ \ge\ A\, N^2 - B\, N 
\end{equation}
we can consider the un-bounded-density boundary conditions $\mathbb{B}^{*}_{g}$ of Procacci and Yuhjtman \cite{procacci2022classical}, treated in Appendix~\ref{app:bounds}, whereas if we restrained the class of potentials to merely be stable, then we would have to restrict the class of boundary conditions to those for which the density is uniformly bounded. The assumption of super-stability also allows an extension to Ruelle's natural tempered boundary conditions \cite{ruelle1970superstable}, those for which there exists $t > 0$ such that $\sum_{|q|_\infty \le n} m_q(b)^2 \le t\,(2n+1)^{d}$ for all integers $n \ge 0$ (with $|q_\infty|= \max_{1 \le i \le d}|q_i|$), fit naturally into this class whenever their local densities are bounded above but not uniformly bounded globally. This case is also treated in the Appendix~\ref{app:tempered}. Naturally, since both sets of boundary conditions are treated, even the superposition of a tempered boundary condition and those considered by Procacci and Yuhjtman is implicitly treated.

\begin{figure}[h!]
\centering
\includegraphics[width=\textwidth]{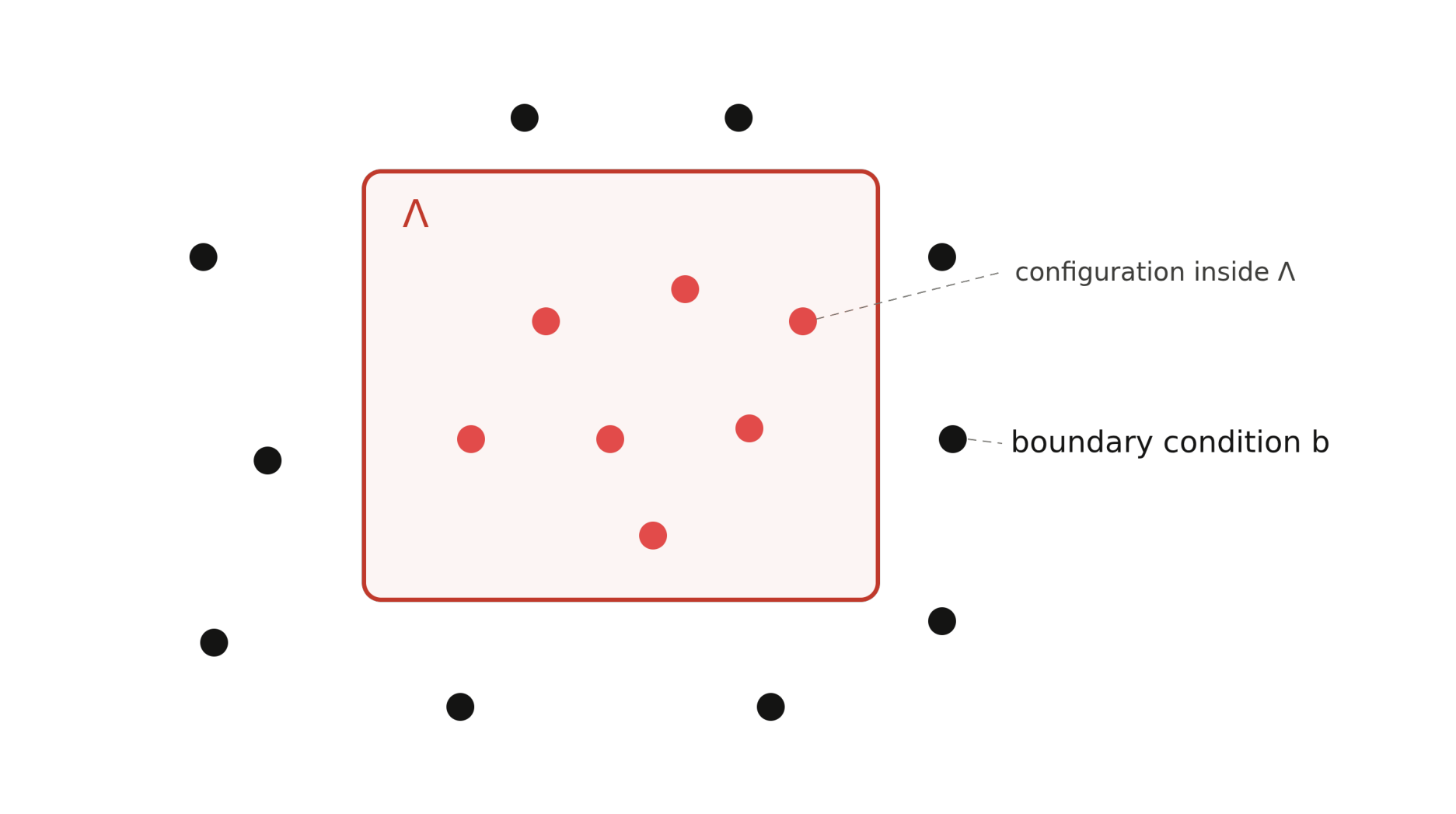}
\end{figure}

We define the finite-volume pressure $p(\mu;\Lambda,b) = |\Lambda|^{-1} \log \Xi(\mu;\Lambda,b)$, which will be the main object of our study. For appropriately chosen $b$, and for $\mu \in \mathbb{R}$ ($z > 0$), the thermodynamic pressure $p(\mu) = \lim_{|\Lambda|\to\infty} p(\mu;\Lambda,b)$, taken in the sense of Fisher or van Hove, exists, is finite, and is independent of $b$ \cite{ruelle1969statistical,ruelle1963classical,fisher1964free,friedli2017statistical}. The first derivative of the finite-volume pressure is the density,
\begin{equation}\label{eq:density}
\frac{1}{\beta} \frac{\partial p}{\partial \mu}(\mu;\Lambda,b) = \rho(\mu;\Lambda,b),
\end{equation}
and the second is the variance of the particle number per unit volume,
\begin{equation}\label{eq:variance}
\frac{1}{\beta^{2}} \frac{\partial^{2} p}{\partial \mu^{2}}(\mu;\Lambda,b) = \frac{\langle N_\Lambda^{2}\rangle - \langle N_\Lambda\rangle^{2}}{|\Lambda|}.
\end{equation}

We study the relation between $\sigma_\Lambda$ and the thermodynamic pressure $p(\mu) = \lim_{|\Lambda|\to\infty} p(\mu;\Lambda,b)$, and give conditions under which $\lim_{|\Lambda|\to\infty} \sigma_\Lambda/|\Lambda| = \beta^{-2}\partial^{2}p(\mu)/\partial\mu^{2}$. To study this limit, we must introduce the Lee--Yang zeros of the partition function, $\Xi$, in the complex fugacity plane $z = e^{\beta\mu}$. Due to the logarithmic relationship, the analytic structure of the pressure is governed by the zeros of $\Xi$. Following Lee and Yang \cite{lee1952statistical2,yang1952statistical1}, we study the analytic continuation of $\Xi(z;\Lambda,b)$ in complex fugacity $z = e^{\beta\mu}$; it has no zeros on the positive real axis, so $p(z;\Lambda,b)$ is analytic there. A phase transition occurs only where the zeros accumulate on the real axis as $|\Lambda| \to \infty$, and the manner of accumulation determines the type of singularity \cite{lebowitz1968analytic}. On any real interval that the zeros avoid, uniformly in $\Lambda$ for all large volumes, $p$ is analytic and no transition occurs. Our main result (Theorem 1.1) is that the derivatives of the pressure in $\mu$ converge and are independent of the boundary condition whenever the Lee--Yang zeros of the partition function in the complex fugacity plane $z = e^{\beta\mu}$ stay bounded away from a real point for all large $\Lambda$.

We state our result for admissible collections of boundary conditions: a collection $\mathcal{B}$ of boundary conditions is called \emph{admissible} if the grand canonical partition function is uniformly upper bounded along the cubes, in the sense that there are constants $c_1, c_2 < \infty$ and $L_1 < \infty$ with
\begin{equation}\label{eq:admissible}
	|\Lambda_L|^{-1} \log \Xi(|z|;\Lambda_L,b) \ \le\ c_1 |z| + c_2 \qquad \text{for all } z \in \mathbb{C},\ b \in \mathcal{B},\ L \ge L_1.
\end{equation}
In the main body we work with the boundary conditions $\mathcal{B}_{\bar{\rho}}$ of uniformly bounded density, only to have a simple upper bound on the partition function (Lemma~\ref{lem:pfbd}); the boundary conditions $\mathbb{B}^{*}_{g}$ of Procacci and Yuhjtman are treated in Appendix~\ref{app:bounds} and those of Ruelle's tempered boundary conditions are treated in Appenix~\ref{app:tempered}.

\begin{theorem}[Exchange of thermodynamic limit and differentiation at $z_0$]\label{thm:main}
	Let $\mathcal{B}$ be an admissible collection with common thermodynamic pressure
\[
p(x) \ =\ \lim_{L\to\infty} p(x;\Lambda_L,b), \qquad x > 0,
\]
independent of $b \in \mathcal{B}$. Fix $z_0 > 0$ and $b \in \mathcal{B}_{\bar{\rho}}$. Suppose that, for some $\delta > 0$, $\Xi(\,\cdot\,;\Lambda_L,b)$ has no zeros in $B_\delta(z_0)$ for all sufficiently large $L$. Then $p$ has a holomorphic extension to a neighborhood of $z_0$, and, for every $n \ge 1$,
\[
\lim_{L\to\infty} \partial_{z}^{n}\, p(z_0;\Lambda_L,b) \ =\ \partial_{z}^{n}\, p(z_0).
\]
Thus the thermodynamic limit and differentiation commute at $z_0$. The value on the right is independent of $b$ among boundary conditions satisfying the zero-free hypothesis at $z_0$.
\end{theorem}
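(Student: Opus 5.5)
The plan is to use the zero-free disc to define a holomorphic logarithm of the partition function, and then apply Vitali's theorem to the resulting finite-volume pressures.

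\textbf{Step 1: holomorphic pressures on the disc.} Since $\Xi(\cdot;\Lambda_L,b)$ is entire in $z$ and has no zeros in the simply connected disc $B_\delta(z_0)$ for $L\ge L_0$, it has a holomorphic logarithm there. We fix the branch that is real on $(z_0-\delta,z_0+\delta)$, which is possible because $\Xi>0$ on the positive reals. This gives holomorphic functions
\[
f_L(z)=|\Lambda_L|^{-1}\log\Xi(z;\Lambda_L,b),\qquad z\in B_\delta(z_0),
\]
which agree with $p(\cdot;\Lambda_L,b)$ on the real segment. We may shrink $\delta$ so that $\delta<z_0$.

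\textbf{Step 2: locally uniform bound on the real part.} The real part satisfies $\operatorname{Re} f_L=|\Lambda_L|^{-1}\log|\Xi|$. Since the coefficients of $\Xi$ are nonnegative, $|\Xi(z)|\le\Xi(|z|)$, and the admissibility bound \eqref{eq:admissible} gives
\[
\operatorname{Re} f_L\le c_1|z|+c_2\le c_1(z_0+\delta)+c_2=:M .
\]
This is an upper bound only. To turn it into a bound on $|f_L|$, we apply the Borel--Carathéodory inequality on the concentric discs of radii $\delta'<\delta$:
\[
\sup_{|z-z_0|\le\delta'}|f_L(z)|\le\frac{2\delta'}{\delta-\delta'}\,M+\frac{\delta+\delta'}{\delta-\delta'}\,|f_L(z_0)|.
\]
The term $|f_L(z_0)|=p(z_0;\Lambda_L,b)$ is bounded because it converges to $p(z_0)$. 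This is where we use that $z_0$ is real: $\operatorname{Im} f_L(z_0)=0$ by the choice of branch. It follows that $(f_L)_{L\ge L_0}$ is uniformly bounded on $\overline{B_{\delta'}(z_0)}$.

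\textbf{Step 3: Vitali's theorem.} The family $(f_L)$ is locally bounded on $B_{\delta'}(z_0)$. It converges pointwise on the real segment $(z_0-\delta',z_0+\delta')$, which has accumulation points in the disc, to the thermodynamic pressure $p(x)$. By Vitali--Porter, $f_L$ converges uniformly on compact subsets of $B_{\delta'}(z_0)$ to a holomorphic function $F$. This $F$ coincides with $p$ on the real segment, so it is the holomorphic extension of $p$.

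\textbf{Step 4: derivatives and independence of $b$.} By the Cauchy integral formula,
\[
\partial_z^n f_L(z_0)=\frac{n!}{2\pi i}\oint_{|z-z_0|=r}\frac{f_L(z)}{(z-z_0)^{n+1}}\,dz
\]
for $r<\delta'$, so uniform convergence on the circle gives $\partial_z^n p(z_0;\Lambda_L,b)\to\partial_z^n F(z_0)$ for every $n$. The same argument gives uniform convergence of the derivatives on a neighborhood of $z_0$. The same reasoning applies in the $\mu$ variable, since $z=e^{\beta\mu}$ is a holomorphic change of variables.

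The limit $F$ is determined by the real-axis values of the common pressure $p(x)$ through the identity theorem. Hence the limits of the derivatives are the same for every $b$ satisfying the zero-free hypothesis at $z_0$, possibly with different $\delta$. Two such extensions agree on the overlap of their discs, which contains a real interval around $z_0$, so their derivatives at $z_0$ coincide.

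\textbf{Main obstacle.} The delicate step is Step 2. Admissibility controls only $\operatorname{Re} f_L$ from above, while Vitali's theorem needs a two-sided bound on $|f_L|$. This is where the absence of zeros and the real normalization at $z_0$ enter. A lower bound on $\log|\Xi|$ cannot be obtained without excluding zeros, so Borel--Carathéodory is the essential bridge. I would check carefully that the branch of the logarithm is chosen consistently for each $L$ and that no imaginary multiples of $2\pi i/|\Lambda_L|$ spoil the normalization. Fixing the branch to be real at $z_0$ handles this.
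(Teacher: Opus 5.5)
Your proposal is correct and follows essentially the same route as the paper's proof. Both arguments take a holomorphic branch of $|\Lambda_L|^{-1}\log\Xi$ normalized to be real at $z_0$, use Borel--Carath\'eodory to upgrade the one-sided admissibility bound on $\operatorname{Re} p(\cdot;\Lambda_L,b)$ to a locally uniform bound on $|p(\cdot;\Lambda_L,b)|$, and finish with Vitali's theorem, Cauchy estimates and the identity theorem. The differences are cosmetic: the paper applies Borel--Carath\'eodory with outer radius $R<\delta$, which you should also do since $f_L$ is only known to be holomorphic on the open disc, and it bounds $p(z_0;\Lambda_L,b)$ directly by $c_1 z_0 + c_2$ rather than through its convergence.
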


In fact, the proof yields the stronger conclusion that, for every $n \ge 1$, $\partial_{z}^{n} p(\,\cdot\,;\Lambda_L,b) \to \partial_{z}^{n} p$ uniformly on $B_{\delta/2}(z_0)$. Setting $\mu_0 = \beta^{-1} \log z_0$, the finite-volume density and variance per unit volume converge at $\mu_0$ to $\beta^{-1} \partial_\mu p(\mu_0)$ and $\beta^{-2} \partial^{2}_{\mu} p(\mu_0)$, respectively.

For finite $\Lambda$ the derivatives $\partial_\mu^{n} p(\mu;\Lambda,b)$ are real-analytic in $\mu$ ($z > 0$) and depend explicitly on $b$. However, for admissible boundary conditions with stable and tempered potentials, the thermodynamic pressure is independent of $b$. Despite this, the infinite-volume Gibbs measure need not be unique: a single analytic pressure can be compatible with several Gibbs measures. An example of this would be the hard-core lattice systems at high-fugacity \cite{jauslin2018high}, in which two distinct Gibbs measures that heavily weight occupation on even or odd lattice sites correspond to the same thermodynamic pressure. The question is whether, as $|\Lambda| \to \infty$, they converge, whether the limit is independent of $b$, and whether it equals $\partial_\mu^{n} p(\mu)$, i.e. whether the thermodynamic limit commutes with differentiation. The limit cannot reproduce $\partial_\mu^{n} p(\mu)$ where $p$ is not $n$ times differentiable. Where $p$ is analytic we expect agreement but cannot prove it in general, owing to limited control over the Lee-Yang zeros. With this main result then we obtain results that extend Dereudre and Flimmel's bound for their considered systems, that is, not only are such systems non-hyperuniform but the limiting variance is equal for all coexisting phases when the Lee-Yang zeros are bounded away.

These Lee--Yang zeros are the tool the argument requires; their connection to phase transitions goes back to Lee and Yang \cite{lee1952statistical2,yang1952statistical1}. For lattice gases and equivalent ferromagnetic spin systems, the zeros in the complex fugacity plane lie on the unit circle (the circle theorem), and the pressure becomes singular precisely where they accumulate on the positive real axis in the thermodynamic limit. Fisher later introduced the analogous zeros in the complex temperature plane \cite{fisher1965nature}, and the picture of phase transitions as limiting distributions of partition-function zeros became a cornerstone of rigorous statistical mechanics.

\section{Analyticity when zeros are bounded away}\label{sec:bounded}

\noindent\textbf{Setting.}\ Let $\phi$ be stable, tempered and lower-regular. Stability gives $\phi(x) \ge -2B$, so that $\phi^{-}(x) \le \psi_0(|x|)$, where $\psi_0(r) := 2B$ for $0 \le r < R_1$ and $\psi_0(r) := \min\{2B, \psi(r)\}$ for $r \ge R_1$ is decreasing and, by \eqref{eq:lr2}, integrable; hence
\[
S \ :=\ \sum_{q \in \mathbb{Z}^{d}} \psi_0\big( (|q| - \sqrt{d}\,)^{+} \big) \ <\ \infty.
\]
where $|q|$ is the Euclidean norm of $q$ and $+$ signifies $\max(\cdot,0)$. For a boundary condition $b$ write $\phi^{-}_{b}(x) := \sum_{y \in b} \phi^{-}(x - y)$; if $\sum_{y \in b} \phi^{+}(x - y)$ diverges we set $\phi_b(x) := +\infty$ and $e^{-\beta\phi_b(x)} := 0$, so that all estimates below use only the termwise bound $\phi_b \ge -\phi^{-}_{b}$.

We begin with a lemma such that for stable, tempered and lower-regular potentials, we can uniformly upper-bound the partition function on concentric cubes:

\begin{lemma}[Partition function bound]\label{lem:pfbd}
	Let $\phi$ be stable, tempered and lower-regular, and set $K := \bar\rho\, S$. Then for every $L \ge 1$, every $b \in \mathcal{B}_{\bar{\rho}}$, and every $z \in \mathbb{C}$,
\[
|\Xi(z;\Lambda_L,b)| \ \le\ \Xi(|z|;\Lambda_L,b) \ \le\ \exp\big( |z|\, |\Lambda_L|\, e^{\beta(B+K)} \big).
\]
\end{lemma}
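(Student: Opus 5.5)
The first inequality is immediate from the series \eqref{eq:partfun}. Every coefficient of $z^N$ is nonnegative, because $e^{-\beta\phi_b}\ge 0$ under the convention $e^{-\beta\phi_b}:=0$ where $\phi_b=+\infty$. Applying the triangle inequality termwise then gives $|\Xi(z;\Lambda_L,b)|\le \Xi(|z|;\Lambda_L,b)$. For the second inequality we may therefore take $z=x\ge 0$. The plan is to bound each $N$-particle integrand pointwise by $e^{\beta(B+K)N}$. Once this is done,
\[
\Xi(x;\Lambda_L,b)\le \sum_{N\ge 0}\frac{x^N}{N!}\,|\Lambda_L|^N e^{\beta(B+K)N}=\exp\big(x|\Lambda_L|e^{\beta(B+K)}\big),
\]
which is the claim.

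The pointwise bound has two parts. \emph{Interior energy:} stability \eqref{eq:s} gives $-\beta\phi(x_1,\dots,x_N)\le \beta B N$. \emph{Boundary energy:} we use the termwise bound $\phi_b(x)\ge-\phi_b^-(x)$ recorded in the Setting, so $\prod_i e^{-\beta\phi_b(x_i)}\le \prod_i e^{\beta\phi^-_b(x_i)}$. It then suffices to prove the one-body estimate
\[
\sup_{x\in\mathbb{R}^d}\phi_b^-(x)\le \bar\rho\,S=K \qquad \text{for } b\in\mathcal{B}_{\bar\rho}.
\]
Combining the two parts gives $e^{\beta BN}e^{\beta KN}$, as required.

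The one-body estimate is the only step with content. Write $x\in\Delta_{q_0}$ and split $b$ according to the unit cubes $\Delta_q$. Each cube contains at most $\bar\rho$ points of $b$. For $y\in\Delta_q$, both $x-q_0$ and $y-q$ lie in $[0,1)^d$, so $|x-y|\ge (|q-q_0|-\sqrt d)^+$. We have $\phi^-\le\psi_0$ and $\psi_0$ is decreasing. Hence each $y\in\Delta_q$ contributes at most $\psi_0\big((|q-q_0|-\sqrt d)^+\big)$. Summing over $q$ and reindexing by $q-q_0$ gives at most $\bar\rho\sum_{q}\psi_0((|q|-\sqrt d)^+)=\bar\rho S$.

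The main point to check is that $\psi_0$ is well defined and decreasing on all of $[0,\infty)$, so that the monotonicity argument applies even to nearby points. This uses $\phi^-\le 2B$, which follows from stability with $N=2$ (giving $\phi(x)\ge -2B$). Finiteness of $S$ comes from \eqref{eq:lr2} by comparing the lattice sum with the radial integral, and this was already asserted in the Setting. No use of temperedness or of the size of $L$ is needed beyond $L\ge 1$.
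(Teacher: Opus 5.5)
Your proposal is correct and follows essentially the same route as the paper. The first inequality comes from non-negativity of the coefficients and the triangle inequality. For the second, the unit-cube decomposition with $|x-y|\ge(|q-q_0|-\sqrt d)^+$ and the monotonicity of $\psi_0$ give $\phi_b^-\le\bar\rho S=K$, and combining this with stability bounds each coefficient of $z^N$ by $(|\Lambda_L|e^{\beta(B+K)})^N/N!$.
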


\begin{proof}
For $x \in \Lambda_L$, writing $q_x$ for the cell containing $x$, we have $|x - y| \ge (|q - q_x| - \sqrt{d}\,)^{+}$ for $y \in \Delta_q$, so, since $\psi_0$ is decreasing,
\[
\phi^{-}_{b}(x) \ =\ \sum_{y \in b} \phi^{-}(x - y) \ \le\ \sum_{q} m_q(b)\, \psi_0\big( (|q - q_x| - \sqrt{d}\,)^{+} \big) \ \le\ \bar\rho\, S \ =\ K,
\]
and hence $\phi_b \ge -K$ on $\Lambda_L$. For a configuration $(x_1,\dots,x_N)$ in $\Lambda_L$, stability then gives
\[
\sum_{i<j} \phi\big(|x_i - x_j|\big) + \sum_{i} \phi_b(x_i) \ \ge\ -(B + K)\, N,
\]
so each coefficient of $z^{N}$ in $\Xi(z;\Lambda_L,b)$ is non-negative and at most $\big(|\Lambda_L|\, e^{\beta(B+K)}\big)^{N}/N!$; the first inequality of the lemma is the triangle inequality applied to the series, and summing the coefficient bound gives the second.
\end{proof}

In particular $\mathcal{B}_{\bar{\rho}}$ satisfies \eqref{eq:admissible} with $c_1 = e^{\beta(B+K)}$ and $c_2 = 0$, for every $L \ge 1$, hence is admissible. From now on $p(z)$ will always refer to the limit given by
\[
p(z;\Lambda,b) = \frac{1}{|\Lambda|} \log \Xi(z;\Lambda,b) \xrightarrow{\;|\Lambda|\to\infty\;} p(z),
\]
The results of this section rest on an application of Vitali's theorem: if a region containing an interval of the positive real axis is free of Lee--Yang zeros for all sufficiently large $|\Lambda|$, then the finite-volume pressures are uniformly bounded there and converge to an analytic limit, so that no phase transition can occur in that region; moreover, all derivatives then converge uniformly, so that the thermodynamic limit commutes with differentiation in such regions.

\begin{proof}[Proof of Theorem 1.1]
	Let $L_0$ be such that $\Xi(\,\cdot\,;\Lambda_L,b)$ has no zeros in $B_\delta(z_0)$ for $L \ge L_0$, and take $L \ge \max(L_0, L_1)$, with $L_1$ as in \eqref{eq:admissible}. By admissibility, $\Xi(|z|;\Lambda_L,b) \le \exp\big(|\Lambda_L|\,(c_1 |z| + c_2)\big) < \infty$ for every $z \in \mathbb{C}$; since all the coefficients of $z$ in Eq.~\eqref{eq:partfun} are non-negative, with the $N = 0$ term equal to $1$, the series converges for every $z$, $\Xi(\,\cdot\,;\Lambda_L,b)$ is entire \cite{ruelle1969statistical}, and $\Xi(x;\Lambda_L,b) \ge 1$ for $x \ge 0$. The ball $B_\delta(z_0)$ being open, simply connected and zero-free, $|\Lambda_L|^{-1} \log \Xi(\,\cdot\,;\Lambda_L,b)$ admits a holomorphic branch there, normalized to be real at $z_0$; on $I := B_\delta(z_0) \cap \mathbb{R}_{>0}$, $\operatorname{Im} \log \Xi$ is continuous with values in $2\pi\mathbb{Z}$ and vanishes at $z_0$, hence identically, so the branch agrees on $I$ with the finite-volume pressure, and we denote it again by $p(\,\cdot\,;\Lambda_L,b)$. Since $|\Xi(z;\Lambda_L,b)| \le \Xi(|z|;\Lambda_L,b)$, \eqref{eq:admissible} gives $\operatorname{Re} p(z;\Lambda_L,b) = |\Lambda_L|^{-1} \log |\Xi(z;\Lambda_L,b)| \le c_1|z| + c_2$ on $B_\delta(z_0)$, while $0 \le p(z_0;\Lambda_L,b) \le c_1 z_0 + c_2$. The Borel--Carath\'eodory inequality \cite[\S 5.5]{titchmarsh1939theory} states that for $f$ holomorphic on $B_R(z_0)$ and $0 < r < R$,
\[
\sup_{B_r(z_0)} |f| \ \le\ \frac{2r}{R - r}\, \sup_{B_R(z_0)} \operatorname{Re} f + \frac{R + r}{R - r}\, |f(z_0)|.
\]
Fix $0 < r < R < \delta$. Applying the inequality to $p(\,\cdot\,;\Lambda_L,b)$ gives, uniformly in $L$,
\[
\sup_{B_r(z_0)} \big| p(\,\cdot\,;\Lambda_L,b) \big| \ \le\ \frac{2r}{R - r} \big( c_1 (z_0 + R) + c_2 \big) + \frac{R + r}{R - r}\, \big( c_1 z_0 + c_2 \big) \ =:\ C_{r,R}.
\]
Thus the family $\big\{ p(\,\cdot\,;\Lambda_L,b) \big\}_{L}$ is locally uniformly bounded on $B_\delta(z_0)$. Along any sequence $L_n \to \infty$, the functions $p(\,\cdot\,;\Lambda_{L_n},b)$ converge on $I$ to $p$ by hypothesis, and $I$ has the accumulation point $z_0$. Vitali's theorem \cite[\S 5.21]{titchmarsh1939theory} therefore gives local uniform convergence on $B_\delta(z_0)$ to the unique holomorphic extension of $p|_{I}$. This extension is the same for every sequence and is independent of $b$, since $p|_{I}$ is; we keep the symbol $p$ and conclude that $p(\,\cdot\,;\Lambda_L,b) \to p$ locally uniformly as $L \to \infty$.

Finally, choose $0 < r < \delta$ and $\rho > 0$ such that $r + \rho < \delta$. Cauchy's formula, applied on the circle of radius $\rho$ about each $z \in B_r(z_0)$, gives
\[
\Big| \partial_{z}^{n} \big[\, p(z;\Lambda_L,b) - p(z) \,\big] \Big| \ =\ \bigg| \frac{n!}{2\pi i} \oint_{|\zeta - z| = \rho} \frac{p(\zeta;\Lambda_L,b) - p(\zeta)}{(\zeta - z)^{n+1}} \, d\zeta \bigg| \ \le\ \frac{n!}{\rho^{n}} \sup_{B_{r+\rho}(z_0)} \big| p(\,\cdot\,;\Lambda_L,b) - p \big| \ \longrightarrow\ 0
\]
uniformly for $z \in B_r(z_0)$. Taking $z = z_0$ proves the asserted exchange of thermodynamic limit and differentiation at $z_0$. Taking $r = \delta/2$ and $\rho = \delta/4$ gives the stronger uniform convergence on $B_{\delta/2}(z_0)$ stated after the theorem.
\end{proof}

\begin{remark}[First Derivate]
Since each finite-volume pressure $p(\,\cdot\,;\Lambda,b)$ is convex in $\mu$, so is the limit $p$; this convexity is a manifestation of thermodynamic stability \cite{israel1979convexity}, and it forces the density $\rho(\mu) = \beta^{-1}\partial p/\partial\mu$ to be non-decreasing, with at most countably many points of discontinuity. Convexity also distinguishes the first derivative from the higher ones: for convex functions converging pointwise, the derivatives converge at every point where the limit is differentiable, so the finite-volume densities $\partial_\mu p(\,\cdot\,;\Lambda,b)$ converge to $\partial_\mu p$ wherever the density is single-valued---that is, away from a first-order transition---with no hypothesis on the zeros. The limit and the first derivative can be exchanged away from phase transitions, whereas the convergence of $\partial^{2}_{\mu} p$ and beyond is what requires the zeros to stay away from the real point. Relatedly, in the region of analyticity the fluctuations of the particle density $N/|\Lambda|$ about its mean are asymptotically Gaussian with variance of order $|\Lambda|^{-1}$, whereas at a first-order transition the limiting distribution becomes bimodal, reflecting the coexistence of two phases \cite{lebowitz1968analytic,jauslin2018high}.
\end{remark}

\section{Singularity-Free Regions}

Having proven the main results of the previous section, we now list some regions of the $z$-plane that are known to be free of Lee--Yang zeros (for large $|\Lambda|$) and to which classical arguments apply:
\begin{itemize}
\item[(a)] \textbf{Low fugacity (high temperature):} $|z| < R = \frac{e^{-2\beta B - 1}}{\int |e^{-\beta\phi} - 1|\, dx}$ for some convergence radius $R > 0$ that depends on the interaction. In this regime, the Mayer cluster expansion \cite{ruelle1969statistical,penrose1963convergence,mayer1940statistical,ursell1927evaluation} and the more modern approaches based on the Kirkwood--Salsburg equations \cite{gallavotti1999statistical,ruelle1963correlation} provide convergent series representations of the pressure and correlation functions. The cluster expansion yields
\begin{equation}\label{eq:mayer}
\beta p(z) = \sum_{n=1}^{\infty} b_n z^{n},
\end{equation}
where $b_n$ are the Mayer cluster integrals (or cluster coefficients). The convergence of this series was first proven rigorously by Groeneveld \cite{groeneveld1962two}, Penrose \cite{penrose1963convergence} and Ruelle \cite{ruelle1963correlation}, and has been sharpened more recently.

\item[(b)] \textbf{Ferromagnetic spin systems with magnetic field $h \ne 0$:} For Ising ferromagnets and equivalent lattice gas systems, the celebrated Lee--Yang circle theorem \cite{lee1952statistical2} states that all zeros of $\Xi$ lie on the unit circle $|z| = 1$ (equivalently, on the imaginary axis in the $\mu$-plane). Consequently, for any nonzero magnetic field (equivalently, $|z| \ne 1$), the pressure is analytic and there is no phase transition. This theorem was extended to more general ferromagnetic interactions by Griffiths \cite{griffiths1969rigorous}, Simon and Griffiths \cite{simon1973phi4}, Lieb and Sokal \cite{lieb1981general}, and Newman \cite{newman1974zeros}. More recent work by Jiang and Newman proved Lee--Yang's conjecture that the modulus of the first Lee--Yang zero decreases as a function of the system size and converges to the end-point of the Lee--Yang distribution of zeros on the imaginary axis (meaning that no zeros leak any closer to the origin) \cite{jiang2024thermodynamic}. The Lee--Yang theorem requires periodic or otherwise non-antiferro boundary conditions.

\item[(c)] \textbf{Hard-core lattice systems at large fugacity ($z \gg 1$):} For several classes of lattice gas models with extended hard-core exclusion---such as the hard-square model on $\mathbb{Z}^{2}$, the hard hexagon model, and various nearest-neighbor exclusion models on bipartite lattices---the pressure can be shown to be analytic in $z^{-1}$ for $|z|$ sufficiently large, even though the system exhibits long-range order and phase coexistence \cite{dobrushin1968problem,heilmann1972theory,baxter1980hard,pirogov1975phase,pirogov1976phase}. These systems provide the key examples for our main result: the pressure is analytic, yet there exist multiple infinite-volume Gibbs measures. In the low-temperature regime, Pirogov--Sinai theory \cite{pirogov1975phase,pirogov1976phase,zahradnik1984alternate} provides a systematic framework for establishing both the analyticity of the pressure and the multiplicity of Gibbs states in such systems. Jauslin and Lebowitz extended these results to non-sliding models, showing that there is a potential annulus cloud of Lee--Yang zeros \cite{jauslin2018high}. He and Jauslin removed the tiling constraint, extending the analysis to non-sliding models whose particles do not tile space at close packing, such as discrete hard-disk models \cite{he2024high}. The single-particle-species restriction was subsequently removed by He \cite{he2026unified}, whose criterion covers polyomino models with rotational degrees of freedom and multi-component mixtures of geometrically distinct shapes, via the Mazel--Stuhl--Suhov extension of Pirogov--Sinai theory.

\item[(d)] \textbf{Exactly solvable models:} For several exactly solvable models, such as the two-dimensional Ising model \cite{onsager1944crystal}, the hard hexagon model \cite{baxter1980hard}, and the six-vertex model \cite{lieb1972two}, the partition function and free energy are known in closed form, allowing a complete determination of the singularity structure in the complex $z$-plane.
\end{itemize}

\section{Discussion and Outlook}

Our main result ties the Lee--Yang theory of phase transitions to the convergence of arbitrary derivatives of the grand canonical partition function: wherever the Lee--Yang zeros stay bounded away from a real point as the volume grows, the finite-volume pressure and all of its derivatives in the chemical potential converge to those of the thermodynamic pressure, uniformly and independently of the boundary condition (Theorem 1.1). The thermodynamic limit there commutes with differentiation, so the limiting law of the particle number $N_\Lambda$ in a large region---and in particular its variance per unit volume $\beta^{-2}\partial^{2}p/\partial\mu^{2}$---is one and the same for every infinite-volume Gibbs measure derived from a boundary condition of uniformly bounded density. This is not obvious, since coexisting measures can differ in all their local properties: for hard-core lattice gases at large fugacity the even- and odd-sublattice phases carry distinct local densities, yet the probability of a given total particle number in a large subvolume is asymptotically equal. The total particle number is a bulk observable fixed by the pressure; the boundary condition moves the finite-volume zeros but not the limit they define, and so cannot change it.

The zero-free hypothesis of Theorem 1.1 is more targeted than it may appear. The pressure itself is independent of the boundary condition for any admissible $b$. What the hypothesis buys is the convergence of the higher derivatives, the limiting variance foremost among them: it is here that the thermodynamic limit and differentiation can fail to commute, and bounding the zeros away from the real point is what restores it.

The coexistence in the systems of Section~3(c) also constrains the zeros globally, not merely near $z_0$. Define
\[
\Xi_s(x_1,\dots,x_s;z,\Lambda,b) \ =\ \sum_{N \ge s} \frac{z^{N}}{(N-s)!} \int_{\Lambda^{N-s}} e^{-\beta\phi(x_1,\dots,x_N)} \prod_{i=1}^{N} e^{-\beta\phi_b(x_i)} \, dx_{s+1} \cdots dx_N ,
\]
The Gibbs measure there is unique at small fugacity, by the convergence of the Mayer expansion \cite{penrose1963convergence}; the finite-volume correlation functions $n_s = \Xi_s/\Xi$ are analytic in $z$ wherever $\Xi$ is zero-free; and Vitali's theorem transports the small-fugacity agreement of their limits along any zero-free region joining the origin to $z_0$, provided the $n_s$ are bounded there at complex $z$ uniformly in the volume --- a bound the pressure never requires, owing to the normalization $|\Lambda_L|^{-1}\log$. Granting that bound, two distinct Gibbs measures at $z_0$ would be impossible if a bounded connected zero-free region contained both $0$ and $z_0$, since their correlation functions, agreeing at small fugacity, would agree at $z_0$ by the identity theorem and, through the Ruelle bounds, determine the same measure \cite{ruelle1970superstable}; the coexistence therefore forces the Lee--Yang zeros to accumulate on a closed set separating $z_0$ from the origin --- for the systems of Section~3(c), to enclose the origin within the annulus of \cite{jauslin2018high}. The required bound on the $n_s$ is established in \cite{lebowitz1968analytic} for lattice gases with attractive pair interactions, through the Lee--Yang factorization, and for non-negative potentials within the zero-free disc about the origin, by Groeneveld's identity \cite{groeneveld1962two}; in the first class coexistence occurs only on the zero set itself, and in the second the covered region is one of low-activity uniqueness. In the intermediate region between the low- and high-fugacity expansions, where the statement has its content, the bound is not known, so the enclosure of the origin, while consistent with \cite{jauslin2018high}, remains a conjecture; establishing it, or working instead with the pressure of a two-activity ensemble carrying a sublattice symmetry-breaking field in the spirit of \cite[\S IX]{lebowitz1968analytic}, would make it a theorem.

The deeper limitation is not physical but methodological: Theorem 1.1 takes the location of the Lee--Yang zeros as a hypothesis rather than supplying it. We invoke a zero-free region, but for a given boundary condition we have no a priori control of where the zeros sit, and it is precisely this that confines our conclusions to the regimes of Section 3, where zero-freeness is established by independent means---cluster expansions, the circle theorem, or Pirogov--Sinai theory. We expect the convergence of all derivatives, and their boundary-condition independence, to hold at every point at which $p$ is analytic, irrespective of how the finite-volume zeros approach the real axis; a finite number of zeros drifting toward a real point at which $p$ remains smooth ought not to matter. Proving this would require quantitative bounds on the zeros---on their density near the real axis and the rate of their approach---of a kind not currently available away from the special models above, and sharpening such bounds, in the spirit of the recent control of the first Lee--Yang zero for ferromagnets \cite{jiang2024thermodynamic}, is the natural next step. Beyond it lie the regimes the present hypotheses exclude by construction: genuinely hyperuniform states, where the volume coefficient $\partial^{2}p/\partial\mu^{2}$ vanishes and the variance grows only sub-extensively, demanding either the long-range interactions of Coulomb systems or the vanishing compressibility of a critical point where the zeros reach the real axis; and the full particle-number distribution at a transition, where the law common to the coexisting phases away from it must break down.

\appendix

\section{The boundary conditions of Procacci and Yuhjtman}\label{app:bounds}

In this appendix we verify \eqref{eq:admissible} for the boundary conditions $\mathbb{B}^{*}_{g}$ of Procacci and Yuhjtman \cite{procacci2022classical}, of possibly unbounded density, at the price of strengthening stability to super-stability. Let $\phi$ be super-stable, tempered and lower-regular in the sense of \eqref{eq:temp}--\eqref{eq:ss}, with super-stability constants $A > 0$, $B \ge 0$ and lower-regularity envelope $\psi$, and let $\psi_0$ be as in Section~\ref{sec:bounded}. Let $g\colon [0,\infty) \to [0,\infty)$ be continuous, non-decreasing and subadditive, define
\[
\mathbb{B}^{*}_{g} \ :=\ \bigcup_{\bar\rho \ge 0} \Big\{\, b:\ m_q(b) \le \bar\rho\, \big(1 + g(|q|_\infty)\big)\ \ \forall\, q \in \mathbb{Z}^{d} \,\Big\}, \qquad m_q(b) = \#\,\big(b \cap \Delta_q\big),
\]
and assume the growth conditions
\[
\lim_{L\to\infty} \frac{\big(1 + g(L)\big)^{2}}{L} \int_{0}^{L} V(r)\, dr \ =\ 0, \qquad
\lim_{L\to\infty} \frac{1 + g(L)}{L} \int_{0}^{L} W(r)\, dr \ =\ 0,
\]
where $V(r) := \int_{|x| \ge r} \psi_0(|x|)\, dx$ and $W(r) := \int_{|x| \ge r} \psi_0(|x|)\, g(|x|)\, dx$. The growth conditions directly imply that  $\int_{\mathbb{R}^{d}} \psi_0(|x|)\big(1 + g(|x|)\big)\, dx < \infty$.  (If $\psi_0(r) \le Cr^{-(d+p)}$ for large $r$ and some $p > 0$, then $g(r) = r^{q}$ with $q < \tfrac12 \min\{1, p\}$ is admissible.)

For $b \in \mathbb{B}^{*}_{g}$, since $\psi_0$ is decreasing and integrable against $1 + g$, and $g$ is subadditive, $\phi^{-}_{b}(x) \le \bar\rho \sum_{q} \big(1 + g(|q|_\infty)\big)\, \psi_0\big((|q - q_x| - \sqrt{d}\,)^{+}\big) < \infty$, locally uniformly in $x$.

Along the cubes $\Lambda_L := [-L, L]^{d}$, $L \ge 1$, set, for each unit cell $\Delta_q$ meeting $\Lambda_L$,
\[
s_q \ :=\ \sup_{\Delta_q \cap \Lambda_L} \phi^{-}_{b}, \qquad S_L \ :=\ \sum_{q} s_q, \qquad K_L \ :=\ \max_{q} s_q \ \ (< \infty \ \text{by the above}).
\]
By \cite[Lemma~4.1]{procacci2022classical}, whose hypotheses are exactly the growth conditions above,
\[
S_L K_L \ =\ o\big(|\Lambda_L|\big) \qquad (L \to \infty);
\]
in the notation of \cite{procacci2022classical} this is the statement $E_\Lambda/|\Lambda_L| \to 0$ for $E_\Lambda := (S_L K_L)^{1/3} |\Lambda_L|^{2/3}$.

\medskip

The analogue of Lemma~\ref{lem:pfbd} for these boundary conditions is:

\begin{lemma}[Partition function bound for $\mathbb{B}^{*}_{g}$]\label{lem:pf}
For every $L \ge 1$, every $b \in \mathbb{B}^{*}_{g}$, and every $z \in \mathbb{C}$,
\[
|\Xi(z;\Lambda_L,b)| \ \le\ \Xi(|z|;\Lambda_L,b) \ \le\ \exp\Big( |z|\, |\Lambda_L|\, e^{\beta B} + \frac{\beta\, S_L K_L}{4A} \Big);
\]
in particular there is $L_1 < \infty$ such that $|\Lambda_L|^{-1} \log \Xi(|z|;\Lambda_L,b) \le |z| e^{\beta B} + 1$ for all $L \ge L_1$ and $z \in \mathbb{C}$.
\end{lemma}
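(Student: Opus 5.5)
The first inequality $|\Xi(z;\Lambda_L,b)|\le \Xi(|z|;\Lambda_L,b)$ follows exactly as in Lemma~\ref{lem:pfbd}: each coefficient of $z^N$ in \eqref{eq:partfun} is non-negative, and we apply the triangle inequality to the series. For the second, we bound the coefficients. The new feature compared to Lemma~\ref{lem:pfbd} is that $\phi_b^-$ is no longer uniformly bounded by a constant. Instead it is controlled cellwise by $s_q$, and super-stability is used to absorb the resulting $N$-dependent loss into a term independent of $z$.

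Fix a configuration $(x_1,\dots,x_N)$ in $\Lambda_L$ and let $n_q$ be the number of points in $\Delta_q\cap\Lambda_L$. Using $\phi_b\ge-\phi_b^-$ we have $\sum_i\phi_b(x_i)\ge-\sum_q n_q s_q$. Super-stability \eqref{eq:ss} is stated globally as $A N^2$, but the relevant form, and the one used by Procacci--Yuhjtman, is the local one: $\phi(x_1,\dots,x_N)\ge \sum_q (A n_q^2 - B n_q)$. This follows from Ruelle's definition of super-stability on unit cells. I would state explicitly that the super-stability is understood cellwise, or, if one insists on the global form, use $A N^2\ge A\sum_q n_q^2$, which follows from $N^2=(\sum n_q)^2\ge\sum n_q^2$. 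Either way the total energy is at least
\[
\sum_q\big(A n_q^2 - s_q n_q\big) - BN .
\]
The key step is to split each term depending on the size of $n_q$. The naive completion of the square $A n_q^2 - s_q n_q\ge -s_q^2/(4A)$ would give $\sum_q s_q^2/(4A)\le K_L S_L/(4A)$, since $\sum_q s_q^2\le(\max_q s_q)\sum_q s_q$. This is exactly the claimed correction $\beta S_L K_L/(4A)$, and the $-BN$ term remains. Hence
\[
\beta\Big[\phi(x_1,\dots,x_N)+\sum_i\phi_b(x_i)\Big]\ \ge\ -\beta B N - \frac{\beta S_L K_L}{4A}.
\]

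Plugging this into \eqref{eq:partfun}, each coefficient of $|z|^N$ is at most $e^{\beta S_LK_L/(4A)}(|\Lambda_L|e^{\beta B})^N/N!$. Summing over $N$ gives the stated bound $\exp\big(|z||\Lambda_L|e^{\beta B}+\beta S_LK_L/(4A)\big)$. For the ``in particular'' statement, we use $S_LK_L=o(|\Lambda_L|)$ from \cite[Lemma~4.1]{procacci2022classical}. Choose $L_1$ such that $\beta S_LK_L/(4A)\le|\Lambda_L|$ for $L\ge L_1$. Dividing by $|\Lambda_L|$ then yields $|z|e^{\beta B}+1$ uniformly in $z$; this is uniform in $b$ only insofar as the cited $o(\cdot)$ bound is, which I would check in \cite{procacci2022classical} (for fixed $\bar\rho$ it depends on $b$ only through $\bar\rho$ and $g$).

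The main obstacle is justifying the cellwise form of super-stability. With only the global bound $AN^2$, one would instead use $AN^2-\sum_q s_qn_q\ge AN^2-K_LN\ge -K_L^2/(4A)$. This bound may not be $o(|\Lambda_L|)$, so the cellwise decomposition, and hence the product $S_LK_L$, is really needed. I would therefore adopt Ruelle's local super-stability, $\phi\ge\sum_q(An_q^2-Bn_q)$, which is the standard meaning in \cite{ruelle1970superstable,procacci2022classical}.
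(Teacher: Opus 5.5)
Your proof is correct and is essentially the paper's: both bound the boundary term by $\sum_q N_q s_q$, use super-stability in the cellwise form $\phi\ge A\sum_q N_q^2-BN$, and optimize the resulting quadratic to reach $-\sum_q s_q^2/(4A)\ge -S_LK_L/(4A)$. The paper does this via Cauchy--Schwarz and minimizing $Au-(uS_LK_L)^{1/2}$ over $u=\sum_q N_q^2$, while you complete the square cell by cell, which gives the same bound.

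One correction to your closing paragraph: since $K_L=\max_q s_q\le\sum_q s_q=S_L$, the global route's loss $K_L^2/(4A)$ is at most $S_LK_L/(4A)=o(|\Lambda_L|)$, so it is not an obstacle. The real reason to read \eqref{eq:ss} cellwise is that the literal global inequality fails for every tempered $\phi$: spreading $N>B/A$ particles far apart drives the energy below $AN^2-BN>0$.
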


\begin{proof}
Let $(x_1,\dots,x_N)$ be a finite configuration in $\Lambda_L$ with $N_q$ denoting the number of the $x_i$ in the unit cube $q + [0,1)^d$. and $t := \sum_{q} N_q^{2}$. By Cauchy--Schwarz and $s_q \le K_L$,
\[
\sum_{i} \phi^{-}_{b}(x_i) \ \le\ \sum_{q} N_q s_q \ \le\ t^{1/2} \Big( \sum_{q} s_q^{2} \Big)^{1/2} \ \le\ \big( t\, S_L K_L \big)^{1/2},
\]
while \eqref{eq:ss} gives $\sum_{i<j} \phi \ge At - BN$; since $Au - (u S_L K_L)^{1/2} \ge -S_L K_L/(4A)$ for all $u \ge 0$,
\[
\sum_{i<j} \phi\big(|x_i - x_j|\big) + \sum_{i} \phi_b(x_i) \ \ge\ -BN - \frac{S_L K_L}{4A}.
\]
Hence each coefficient of $z^{N}$ in $\Xi(z;\Lambda_L,b)$ is non-negative and at most $(|\Lambda_L| e^{\beta B})^{N} e^{\beta S_L K_L/(4A)}/N!$; the first inequality is the triangle inequality applied to the series, summing the coefficient bound gives the second, and the last claim follows from $S_L K_L = o(|\Lambda_L|)$.
\end{proof}

In particular $\mathbb{B}^{*}_{g}$ satisfies \eqref{eq:admissible} with $c_1 = e^{\beta B}$ and $c_2 = 1$, hence is admissible. Moreover, by \cite{procacci2022classical}, the thermodynamic limit $p(x) = \lim_{L\to\infty} p(x;\Lambda_L,b)$, $x > 0$, exists along the cubes and is independent of $b \in \mathbb{B}^{*}_{g}$, so Theorem~\ref{thm:main} applies to $\mathbb{B}^{*}_{g}$.

\section{Ruelle's tempered boundary conditions}\label{app:tempered}

In this appendix we verify \eqref{eq:admissible} for the tempered boundary conditions of Ruelle \cite{ruelle1970superstable}, again at the price of strengthening stability to super-stability. Let $\phi$ be super-stable, tempered and lower-regular in the sense of \eqref{eq:temp}--\eqref{eq:ss}, with super-stability constants $A > 0$, $B \ge 0$ and lower-regularity envelope $\psi$, let $\psi_0$ and $S$ be as in Section~\ref{sec:bounded}, and let $b$ be tempered with constant $t$, that is, $\sum_{|q|_\infty \le n} m_q(b)^2 \le t\,(2n+1)^{d}$ for all integers $n \ge 0$.

\begin{lemma}[Partition function bound for tempered boundary conditions]\label{lem:pftemp}
Let $\phi$ be super-stable and lower-regular, and let $b$ be tempered with constant $t$. Then there is a constant $C_0 = C_0(d, R_1, \psi, B, t)$ such that, for every $L \ge 1$ and every $z \in \mathbb{C}$,
\[
|\Xi(z;\Lambda_L,b)| \ \le\ \Xi(|z|;\Lambda_L,b) \ \le\ \exp\Big( |\Lambda_L| \Big( |z| + \frac{\beta\, (B + C_0)^{2}}{4A} \Big) \Big).
\]
\end{lemma}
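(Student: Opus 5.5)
The plan mirrors the proof of Lemma~\ref{lem:pf}: obtain a lower bound on the total energy of any configuration in $\Lambda_L$, linear in $N$ plus a volume term, and then sum the coefficients of $\Xi$. The first inequality is again the triangle inequality applied to the power series, whose coefficients are non-negative.

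\textbf{Step 1: bound the boundary field cell by cell.} For $x \in \Delta_p \cap \Lambda_L$ we have $|x-y| \ge (|q-p| - \sqrt{d})^+$ for $y \in \Delta_q$, and $\psi_0$ is decreasing, so
\[
\phi^-_b(x) \ \le\ \sum_q m_q(b)\, \psi_0\big((|q-p|-\sqrt d)^+\big) \ =:\ s_p .
\]
We then write $\sum_i \phi^-_b(x_i) \le \sum_p N_p s_p$. Since $N_p s_p \le \tfrac{A}{2}N_p^2 + s_p^2/(2A)$, this is absorbed into super-stability, and alternatively one may split $s_p$ into a bounded part and a remainder.

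\textbf{Step 2: control $\sum_{p} s_p^2$ by temperedness.} The key estimate is
\[
\sum_{p \in \Lambda_L} s_p^2 \ \le\ C_0^2\, |\Lambda_L| .
\]
To prove it, apply Cauchy--Schwarz with weights $\psi_0$:
\[
s_p^2 \ \le\ S \sum_q m_q(b)^2\, \psi_0\big((|q-p|-\sqrt d)^+\big).
\]
Summing over $p \in \Lambda_L$ and exchanging sums gives $\sum_q m_q^2\, w_L(q)$ with $w_L(q) := \sum_{p\in\Lambda_L} \psi_0(\cdots)$. Cells $q$ inside a dilate $\Lambda_{2L}$ contribute at most $S\, t\,(4L+1)^d \lesssim t\,|\Lambda_L|$. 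For cells far away, decompose into dyadic shells $|q|_\infty \approx 2^kL$. On such a shell $w_L(q) \le |\Lambda_L|\,\psi_0(2^{k-1}L - \sqrt d)$, while temperedness bounds the shell mass by $t\,(2^{k+2}L)^d$. The shell sum is then controlled by $t\,|\Lambda_L| \sum_k (2^kL)^d \psi_0(2^{k-1}L)$, which is finite by integrability of $\psi_0 r^{d-1}$ and monotonicity (Cauchy condensation). This step is the main obstacle: temperedness controls only averaged $\ell^2$ mass, and the tail must be summed carefully. The constant $C_0$ obtained depends only on $d, R_1, \psi, B, t$, consistent with the statement.

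\textbf{Step 3: combine with super-stability.} Write $t' := \sum_p N_p^2$. Cauchy--Schwarz gives
\[
\sum_i \phi^-_b(x_i) \ \le\ \big(t' \cdot C_0^2 |\Lambda_L|\big)^{1/2}.
\]
Super-stability gives $\sum_{i<j}\phi \ge A t' - BN$. Using $N \le t'$ and $N \le (t'|\Lambda_L|)^{1/2}$ (Cauchy--Schwarz over at most $|\Lambda_L|$ cells), the negative part $BN + C_0(t'|\Lambda_L|)^{1/2}$ is at most $(B+C_0)(t'|\Lambda_L|)^{1/2}$. Minimizing $Au - (B+C_0)\sqrt{u|\Lambda_L|}$ over $u \ge 0$ yields the lower bound $-(B+C_0)^2|\Lambda_L|/(4A)$ on the total energy.

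\textbf{Conclusion.} Each coefficient of $z^N$ is therefore at most $|\Lambda_L|^N e^{\beta (B+C_0)^2|\Lambda_L|/(4A)}/N!$. Summing over $N$ gives exactly the stated bound, with $|z|$ appearing without an $e^{\beta B}$ factor, since $B$ has been absorbed into the quadratic term.
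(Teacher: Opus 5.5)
Your proof is correct. It shares the paper's skeleton: a cellwise field bound $s_p$, the volume estimate $\sum_p s_p^2 \le C_0^2|\Lambda_L|$ from temperedness, then quadratic minimization against super-stability. The difference is in how you prove the key estimate.

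\textbf{Paper's route.} The paper splits the source sum at $|q|_\infty \le 4L$ and treats the inner part by Young's inequality $\ell^1 * \ell^2 \to \ell^2$. For the outer part it proves a pointwise bound $s''_{q'} \le c_d t^{1/2} S$, uniform in $q'$. This forces it to convert the $\ell^2$ temperedness into $\ell^1$ cube masses $A_n \le t^{1/2}(2n+1)^d$ and to perform two summations by parts, with boundary terms at infinity to control.

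\textbf{Your route.} You apply the weighted Cauchy--Schwarz inequality $s_p^2 \le S \sum_q m_q(b)^2\, \psi_0\big((|q-p|-\sqrt d)^+\big)$ first and then exchange the sums. Temperedness is then used directly in its native $\ell^2$ form on dyadic shells, and the tail needs only monotonicity of $\psi_0$ and integrability of $\psi_0(r)r^{d-1}$, via condensation. Your route is shorter; the paper's yields the slightly finer information that distant boundary particles create a uniformly bounded field in $\Lambda_L$.

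\textbf{Combination step.} Your global Cauchy--Schwarz, as in Lemma~\ref{lem:pf}, is equivalent to the paper's cellwise bound $AN^2 - (B+s)N \ge -(B+s)^2/(4A)$.

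\textbf{Details to tidy.}
\begin{enumerate}
\item The number of unit cells meeting $\Lambda_L$ is up to $(2L+2)^d \le 2^d|\Lambda_L|$, not at most $|\Lambda_L|$. So the $BN$ term actually produces $(2^{d/2}B + C_0)^2$ rather than $(B+C_0)^2$. This is harmless, since $C_0$ is allowed to depend on $B$ and $d$, and the paper makes the same replacement.
\item On the shell $2^kL < |q|_\infty \le 2^{k+1}L$, the distance to cells meeting $\Lambda_L$ is only $\ge 2^{k-1}L - 1$. The condensation sum should therefore be shown bounded uniformly in $L \ge 1$ with argument $(2^{k-1}L - 1 - \sqrt d)^+$. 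This is immediate: use $\psi_0 \le 2B$ on the boundedly many shells where $2^{k-1}L < 2(1+\sqrt d)$, and comparison with $\int \psi_0(r) r^{d-1}\,dr$ on the rest.
\end{enumerate}
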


\begin{proof}
The proof rests on the following estimate on the boundary field.

\medskip
\noindent\textit{Claim.} There is $C_0 = C_0(d, R_1, \psi, B, t) < \infty$ such that for every $L \ge 1$,
\[
\frac{1}{|\Lambda_L|} \sum_{q' :\, \Delta_{q'} \cap \Lambda_L \ne \emptyset} \Big( \sup_{x \in \Delta_{q'}} \phi^{-}_{b}(x) \Big)^{2} \ \le\ C_0^{2}.
\]

Granting the claim, the lemma follows in three lines. For a configuration $(x_1, \dots, x_N)$ in $\Lambda_L$, with $N_{q'}$ denoting the number of the $x_i$ in $\Delta_{q'}$, super-stability \eqref{eq:ss} and the claim's integrand bound the total energy by
\[
\phi(x_1, \dots, x_N) + \sum_{i=1}^{N} \phi_b(x_i) \ \ge\ \sum_{q'} \Big( A N_{q'}^{2} - (B + s_{q'})\, N_{q'} \Big) \ \ge\ -\frac{1}{4A} \sum_{q'} \big( B + s_{q'} \big)^{2},
\]
where $s_{q'}$ denotes the supremum in the claim, and the second step is the elementary inequality $A N^{2} - c N \ge -c^{2}/4A$. By the claim, $\sum_{q'} (B + s_{q'})^{2} \le |\Lambda_L| (B + C_0)^{2}$ up to the harmless replacement of the cube count by $|\Lambda_L|$, and inserting the energy bound into the partition function gives the assertion, exactly as in the proof of Lemma~\ref{lem:pfbd}.

It remains to prove the claim. Fix $q'$ with $\Delta_{q'} \cap \Lambda_L \ne \emptyset$, so $|q'|_\infty \le L + 1$. Since points of $b$ in $\Delta_q$ are at distance at least $(|q - q'| - \sqrt{d}\,)^{+}$ from any $x \in \Delta_{q'}$, and $\phi^{-} \le \psi_0(|\cdot|)$ with $\psi_0$ decreasing (Section~\ref{sec:bounded}),
\[
s_{q'} \ \le\ \sum_{q \in \mathbb{Z}^{d}} m_q(b)\, \psi_0\big( (|q - q'| - \sqrt{d}\,)^{+} \big),
\]
where, as in Section~\ref{sec:bounded}, $S = \sum_{q \in \mathbb{Z}^{d}} \psi_0\big( (|q| - \sqrt{d}\,)^{+} \big) < \infty$. Split the sum at $|q|_\infty \le 4L$ and write $s_{q'} \le s'_{q'} + s''_{q'}$ accordingly.

For the inner part, Young's inequality for the convolution of $\psi_0\big( (|\cdot| - \sqrt{d}\,)^{+} \big) \in \ell^{1}(\mathbb{Z}^{d})$ with $m(b)\, \mathbf{1}_{\{|q|_\infty \le 4L\}} \in \ell^{2}(\mathbb{Z}^{d})$, and then the tempering hypothesis at radius $4L$, give
\[
\sum_{q'} \big( s'_{q'} \big)^{2} \ \le\ S^{2} \sum_{|q|_\infty \le 4L} m_q(b)^{2} \ \le\ S^{2}\, t\, (8L + 1)^{d} \ \le\ 8^{d} S^{2} t\, |\Lambda_L|.
\]
For the outer part, $|q|_\infty > 4L$ while $|q'|_\infty \le L + 1 \le 2L$, so $|q - q'| \ge |q|_\infty/2$, giving the bound, uniform in $q'$,
\[
s''_{q'} \ \le\ \sum_{|q|_\infty > 4L} m_q(b)\, \psi_0\big( (|q|_\infty/2 - \sqrt{d}\,)^{+} \big) \ =\ \sum_{n > 4L} a_n\, \psi_0\big( (n/2 - \sqrt{d}\,)^{+} \big), \qquad a_n := \sum_{|q|_\infty = n} m_q(b).
\]
By Cauchy--Schwarz on each sphere $\{|q|_\infty = n\}$, which contains at most $2d(2n+1)^{d-1}$ points, and by the tempering hypothesis,
\[
A_n \ :=\ \sum_{k \le n} a_k \ \le\ \Big( (2n+1)^{d} \sum_{|q|_\infty \le n} m_q(b)^{2} \Big)^{1/2} \ \le\ t^{1/2} (2n+1)^{d}.
\]
Summation by parts against the decreasing $n \mapsto \psi_0\big( (n/2 - \sqrt{d}\,)^{+} \big)$ then gives
\begin{align*}
\sum_{n > 4L} a_n\, \psi_0\big( (n/2 - \sqrt{d}\,)^{+} \big) \ &\le\ \sum_{n > 4L} A_n \Big( \psi_0\big( (n/2 - \sqrt{d}\,)^{+} \big) - \psi_0\big( ((n+1)/2 - \sqrt{d}\,)^{+} \big) \Big) \\
&\le\ t^{1/2} \sum_{n \ge 1} (2n+1)^{d} \Big( \psi_0\big( (n/2 - \sqrt{d}\,)^{+} \big) - \psi_0\big( ((n+1)/2 - \sqrt{d}\,)^{+} \big) \Big),
\end{align*}
and a second summation by parts converts the right-hand side into $c_d\, t^{1/2} \sum_{n \ge 1} (2n+1)^{d-1}\, \psi_0\big( (n/2 - \sqrt{d}\,)^{+} \big)$, which is finite by \eqref{eq:lr2} and comparable to $S$. Hence $s''_{q'} \le c_d\, t^{1/2} S$ uniformly in $q'$, and
\[
\sum_{q'} \big( s''_{q'} \big)^{2} \ \le\ c_d^{2}\, t\, S^{2}\, |\Lambda_L|.
\]
Combining the two parts and setting $C_0^{2} := 2\big( 8^{d} + c_d^{2} \big)\, t\, S^{2}$ proves the claim.
\end{proof}

In particular, for each $t > 0$, the collection of boundary conditions tempered with constant $t$ satisfies \eqref{eq:admissible} with $c_1 = 1$ and $c_2 = \beta\, (B + C_0)^{2}/(4A)$, for every $L \ge 1$, hence is admissible. Theorem~\ref{thm:main} applies whenever its common thermodynamic-limit hypothesis also holds.

\bibliographystyle{unsrt}
\bibliography{bibliography}

\end{document}